\documentclass[format=acmsmall, review=false]{acmart}
\usepackage{acm-ec-26}
\usepackage{booktabs} 
\usepackage[ruled]{algorithm2e} 
\usepackage{tcolorbox}
\usepackage{amsmath, amsthm, bm, blkarray, comment, mathtools, enumitem, adjustbox}
\usepackage[subrefformat=parens]{subcaption} 
\usepackage{booktabs}
\usepackage{lipsum}
\usepackage[table]{xcolor}
\usepackage{tcolorbox}

\SetAlFnt{\small}
\SetAlCapFnt{\small}
\SetAlCapNameFnt{\small}
\SetAlCapHSkip{0pt}
\IncMargin{-\parindent}

\newtheorem{assumption}{Assumption}

\setcitestyle{authoryear}

\title[Redefining Stablecoins from Nominal to Real Value: A Maximum Likelihood Approach]{Redefining Stablecoins from Nominal to Real Value: A Maximum Likelihood Approach}

\author{Tomonori Kanno}
\affiliation{%
  \institution{VLUE Inc.}
  \city{Tokyo}
  \country{Japan}
}

\author{Kensuke Ito}
\affiliation{%
  \institution{The University of Tokyo}
  \department{Endowed Chair for Blockchain Innovation}
  \city{Tokyo}
  \country{Japan}
}

\author{Yushi Yoshimura}
\affiliation{%
  \institution{VLUE Inc.}
  \city{Tokyo}
  \country{Japan}
}

\author{Kyohei Shibano}
\affiliation{%
  \institution{The University of Tokyo}
  \department{Endowed Chair for Blockchain Innovation}
  \city{Tokyo}
  \country{Japan}
}

\begin{abstract}
Stablecoins, typically pegged to fiat currencies, cannot achieve true stability because they inherit fluctuations in the underlying unit of account.
To overcome this limitation, we introduce a stablecoin pegged to the \textit{Maximum Likelihood Value} (MLV), a newly defined unit of account derived as the most probable configuration of latent real-value movements that explains observed nominal-value (price) changes.
Grounded in inferential statistics and modern portfolio theory, MLV represents the most stable unit of account, as it enforces a zero real return on the minimum-variance portfolio.
Empirical results confirm the operational viability of an MLV-pegged stablecoin: MLV can be computed in real time from 500 asset price series and improves annualized returns and Sharpe ratios while substantially reducing turnover in portfolio optimization.
\end{abstract}

\begin{document}


\maketitle

\begingroup
\renewcommand{\thefootnote}{}
\footnotetext{%
\textbf{Contact:}
Tomonori Kanno (\texttt{kanno@vlue.jp});
Kensuke Ito (\texttt{k-ito@g.ecc.u-tokyo.ac.jp});
Yushi Yoshimura (\texttt{yoshimura@vlue.jp});
Kyohei Shibano (\texttt{shibano@tmi.t.u-tokyo.ac.jp}).
}
\endgroup

\section*{Disclosure}
The concept of MLV and its application to portfolio optimization are the subject of patent applications filed by VLUE, Inc. in Japan and internationally.

\setcounter{tocdepth}{2} 
\tableofcontents

\section{Introduction} \label{in}

Stablecoins, defined as ``a digital asset that has mechanisms to maintain a low deviation of its price from a target price \cite{woltzenlogel2023stablecoin},'' have supported decentralized systems by enabling low-volatility value storage and cross-border transfer. 
For most stablecoins, achieving low volatility has meant pegging their value to conventional units of account such as fiat currencies and gold.
However, these units of account are not truly stable in terms of value.
As illustrated in Figure \ref{fig:exchange}, the time series of the U.S. dollar (USD) relative to the Japanese yen (JPY) and to gold (XAU) both exhibit substantial volatility, indicating that conventional units of account are themselves volatile.
Consequently, to achieve genuine stability, stablecoins must be redefined as being pegged not to \textit{nominal} values (i.e., prices), but to \textit{real} values, even if such values are unobservable.

\begin{figure}
    \centering
    \begin{minipage}{1.0\linewidth}
        \centering
        \includegraphics[width=0.55\linewidth]{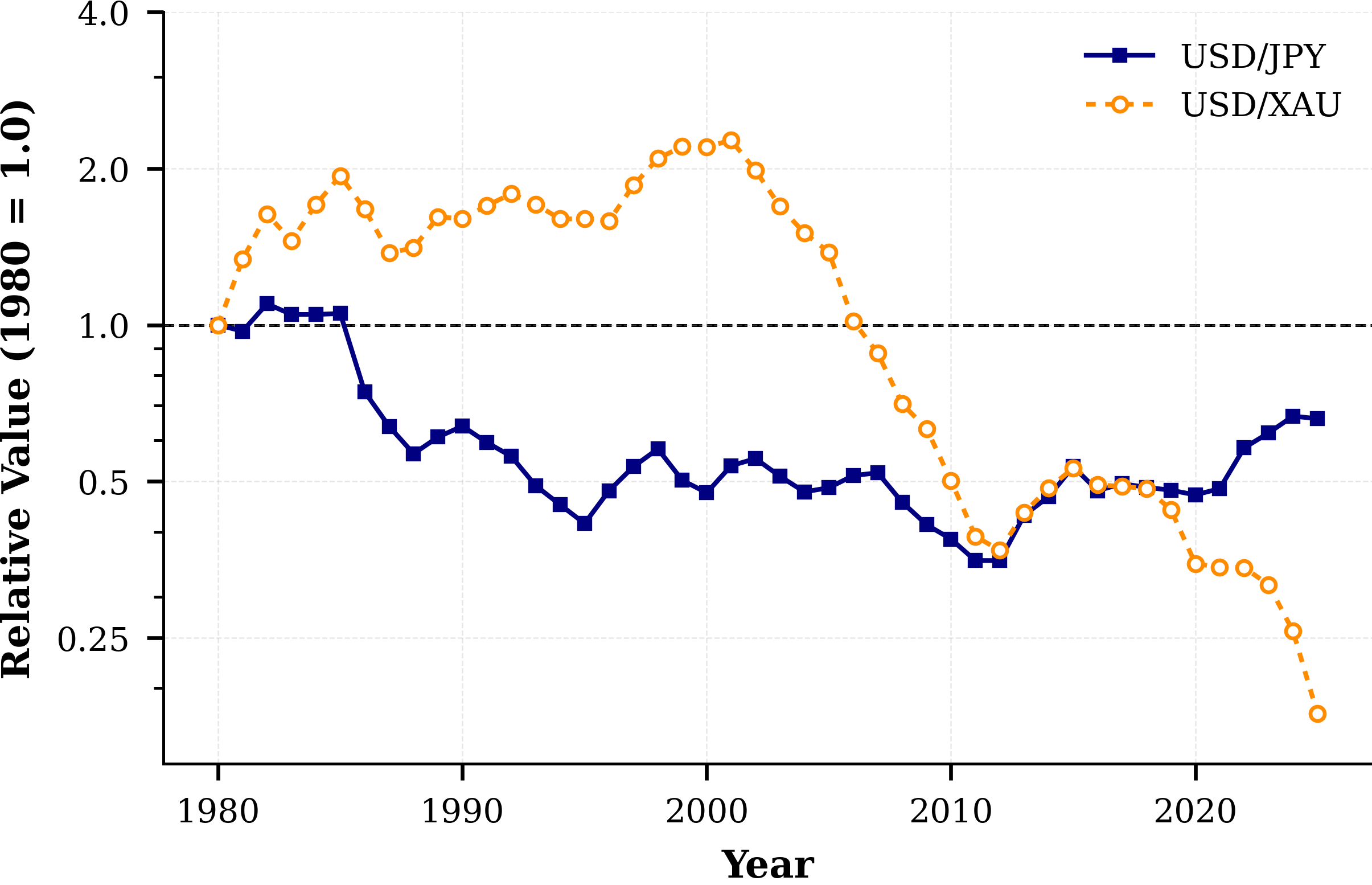}

        \bigskip
        \footnotesize
        \emph{Source:} Author’s visualization based on \citet{imfDataHome} and
        \citet{bankofenglandBankEngland} data.\\
        \emph{Disclosure:} This figure was generated with the assistance of AI and then refined and finalized by the authors.\\
        \emph{Note:} The vertical axis is shown on a log scale.
    \end{minipage}
    \caption{USD/JPY and USD/XAU time series normalized to 1.0 in 1980. The observed volatility illustrates the instability of conventional units of account, highlighting the limitations of existing stablecoin pegs.}
    \label{fig:exchange}
\end{figure}

In economics, real value is often proxied by \textit{purchasing power}, which measures the consumption bundle supported by a unit of currency.
This idea has motivated units of account with constant purchasing power, from the concept of compensated dollar \cite{fisher1913compensated} to recent stablecoin proposals (Section \ref{re}).
However, purchasing power is not well suited to stablecoins used for cross-border value transfer because it relies on region-specific market baskets whose composition varies across jurisdictions and over time.
A real-value-pegged stablecoin therefore requires a new unit of account free from such locality and arbitrariness.

Building on this requirement, we propose a stablecoin pegged to a new unit of account: the \textit{Maximum Likelihood Value} (MLV).
Instead of relying on market baskets, the MLV infers unobservable real value from global financial data.
As shown in Figure \ref{fig:estimation}, it selects the most probable configuration of latent real-value movements (e.g., scenario 3,626: 1.0$\times$ for USD and 0.67$\times$ for JPY) that explains an observed price change (1.5$\times$ for USD/JPY).
Because this inference procedure is uniquely determined and its precision improves with the number of observed assets (Section \ref{ex1}), the MLV avoids the locality and arbitrariness discussed above.
Moreover, the MLV is (i) the most stable unit of account from the perspectives of inferential statistics and \textit{modern portfolio theory} \cite{markowits1952portfolio} (Section \ref{mo-sta}) and (ii) computable in real time from 500 asset price series (Section \ref{ex1}), making it an optimal unit of account for real-value-pegged stablecoins.

\begin{figure}
    \centering
    \includegraphics[width=0.65\linewidth]{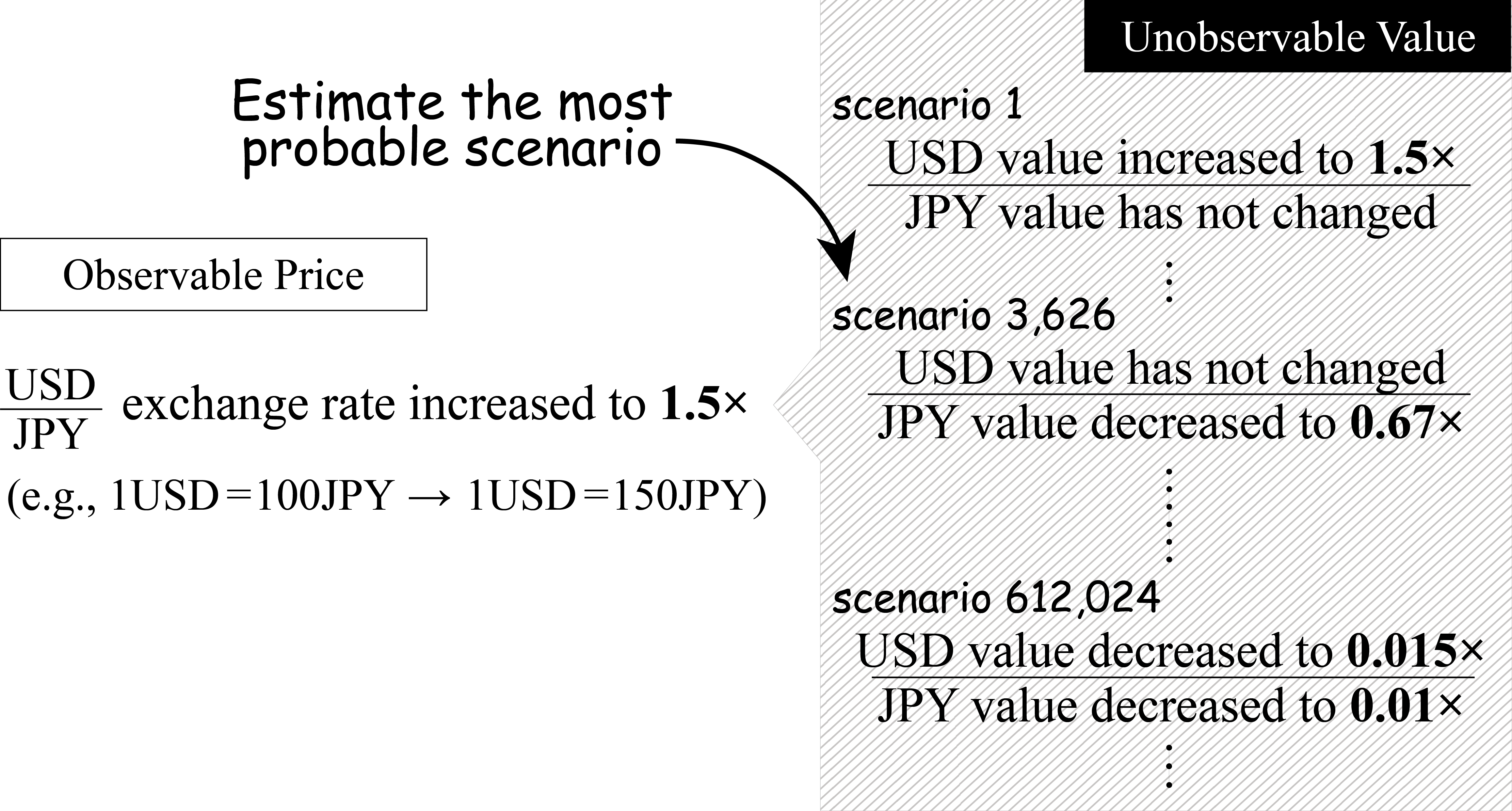}
    \caption{MLV estimation. An observed 1.5$\times$ increase in USD/JPY corresponds to infinitely many possible underlying value-fluctuation scenarios.}
    \label{fig:estimation}
\end{figure}

\begin{figure}
  \centering
  \includegraphics[width=1.0\textwidth]{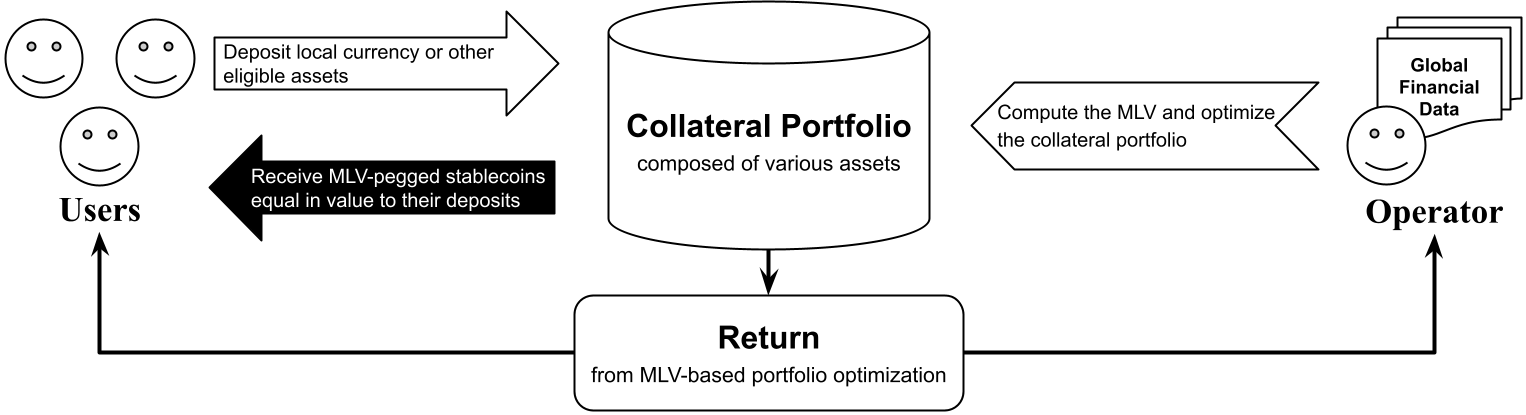}
  \caption{MLV-pegged stablecoins are issued against a collateral portfolio that is periodically rebalanced using the MLV and modern portfolio theory, with the resulting returns distributed among the portfolio, users, and the operator.}
  \label{fig:outview}
\end{figure}

MLV-pegged stablecoins are issued through the scheme illustrated in Figure \ref{fig:outview}, in which \textit{users} deposit assets into a collateral portfolio in exchange for MLV-denominated stablecoins of equivalent value.
In this scheme, an \textit{operator} computes and publishes the MLV using global financial data.
To prevent under-collateralization and incentivize deposits, the operator also performs \textit{MLV-based portfolio optimization}, periodically rebalancing the collateral portfolio using the MLV and modern portfolio theory (Section~\ref{mo-manage})\footnote{When users return MLV-pegged stablecoins, they receive assets of equivalent value in MLV terms from the then-current collateral portfolio. The received assets may therefore differ in composition from their original deposits.}. 
The resulting returns are distributed among the portfolio, users, and the operator.
As demonstrated in Section \ref{ex2}, this optimization improves the portfolio’s annual returns and \textit{Sharpe ratios} \cite{sharpe1964capital} while substantially reducing turnover, thereby providing deposit incentives that are absent in existing fiat-backed stablecoins such as \textit{USDT} \cite{tetherTransparency} and \textit{USDC} \cite{usdcUSDCWorlds}.

\paragraph{Contributions}
In sum, our study makes the following three contributions:
\begin{itemize}
    \item Redefining stablecoins as assets pegged to real rather than nominal values.
    \item Introducing the MLV, an optimal unit of account for real-value-pegged stablecoins. 
    \item Demonstrating the effectiveness of MLV-based portfolio optimization. 
\end{itemize}

\noindent
To the best of our knowledge, this is the first study to estimate unobservable real values and demonstrate the feasibility of a truly stable stablecoin.

The remainder of this paper is organized as follows.
Section~\ref{re} reviews related work,
Section~\ref{mo} presents the MLV framework,
Section~\ref{ex} reports the experimental evaluation, and
Section~\ref{co} concludes with a discussion of future research directions.

\section{Related Work} \label{re}

This section reviews three streams of literature corresponding to the contributions of this study: \textit{stablecoins}, \textit{units of account}, and \textit{modern portfolio theory}.

\subsection{Stablecoins} \label{re-st}

Stablecoins have been surveyed from various perspectives, including collateral type, stabilization mechanism, and underlying blockchain \cite{bullmann2019search, ito2020stablecoin, ante2023systematic, ling2025sok}.
These surveys highlight several attempts to reduce volatility by moving beyond conventional pegs.
For instance, \textit{Libra} \cite{archiveLibraWhite} aimed to enhance stability by pegging its value to a basket of multiple fiat currencies\footnote{On December 1, 2020, Libra was rebranded as \textit{Diem} \cite{diemDiemBlockchain}, marking a shift away from its original goal of basket-based stability. Several studies \cite{zetzsche2019regulating, europaMoneyPrivate, arner2020stablecoins, simmons2021regulating} have summarized the background of this policy change, generally attributing it to conflicts between Libra’s intended function as a payment system and financial regulations in various countries.}, a concept later extended in research on baskets of multiple stablecoins \cite{grobys2025stablecoin}.
Similarly, \textit{Flatcoin} \cite{emmett2023flatcoins} is designed to maintain constant purchasing power by linking its value to an existing inflation index, a concept recently implemented as the \textit{Frax Price Index} \cite{fraxOverviewCPI}.
Our study contributes to this literature by redefining stablecoins as assets pegged to real rather than nominal values through the MLV, a new unit of account that is neither tied to nominal currency baskets nor subject to the locality and arbitrariness of inflation indices.

\subsection{Units of Account} \label{re-ua}

Units of account determine what stablecoins treat as stable, and prior attempts to design more stable units can be grouped into two approaches.
The first is the \textit{basket approach} (as in Libra), which combines existing units of account into a new composite unit.
Notable examples include the \textit{Special Drawing Rights} \cite{imfWhatSDR} and the \textit{European Currency Unit} \cite{louw1987ecu}, the predecessor of the euro, both calculated as weighted averages of multiple fiat currencies.
Although such units are computable in real time, they cannot guarantee real-value stability because they remain exposed to inflation and deflation in their constituent currencies.
For example, \citet{hovanov2004computing} derived minimum-variance weights for a given currency set. 
However, the resulting unit reflects real value only if the aggregate value of the currency set remains constant, i.e., if inflation and deflation across the constituent currencies continuously offset one another.

The second is the \textit{indexed approach} (as in Flatcoin), which pegs a new unit to an existing inflation index that approximates changes in purchasing power.
Examples include the \textit{Unidad de Fomento} \cite{bcentralIndicadoresDiarios} in Chile and the \textit{Unidad de Inversi\'{o}n} \cite{banxicoEstructuraInformacixF3n} in Mexico, both tied to their respective countries' \textit{Consumer Price Index} (CPI)\footnote{These units were developed in Latin American countries between the 1960s and 1990s, a period characterized by unstable price levels. However, \citet{fisher1913compensated} had already emphasized the importance of stable units with constant purchasing power, achievable through indexation. For discussions on the indexed approach in economics, see \citet{shiller1998indexed} and \citet{ho2018search}.}.
Although such units approximate real value through purchasing power, they are neither computable in real time nor free from locality and arbitrariness, because inflation indices are published ex post and depend on region-specific baskets of goods and services.
For instance, \citet{ho2012globalization, ho2018search} derived inflation-adjusted weights for a currency set using national CPIs, effectively creating a hybrid of the basket and indexed approaches.
The resulting unit, however, inherits the same CPI limitations, since the underlying baskets vary across countries and over time.

Thus, at the level of units of account, the same limitations appear: existing approaches are either tied to nominal currency baskets or subject to the locality and arbitrariness of inflation indices.
Our study contributes to this literature by introducing the MLV as an optimal unit of account for real-value-pegged stablecoins, combining real-time computability with high stability while avoiding both limitations.


\subsection{Modern Portfolio Theory}
Modern portfolio theory offers a framework for optimizing portfolios based on the mean-variance tradeoff of asset returns, with the Sharpe ratio providing a measure of risk-adjusted performance.
A key challenge in applying modern portfolio theory is the volatility of the unit of account itself.
For example, a depreciation of the USD can make the returns on portfolio assets appear higher in USD terms, obscuring their real return correlations and complicating optimal allocation.
This divergence between nominal and real returns has been examined since the mid-1970s \cite{adler1975assessment, solnik1978inflation, makin1978portfolio, manaster1979real}, motivating developments such as the \textit{International CAPM} \cite{adler1983international} and empirical \textit{multi-factor models} \cite{chen1986economic, fama1992cross}.
However, these approaches typically incorporate exchange-rate shocks and inflation as additional factors within a nominal framework, leaving unit-of-account volatility itself unaddressed.

Our study contributes to this literature by demonstrating the effectiveness of MLV-based portfolio optimization.
By using the MLV as the benchmark, we implement modern portfolio theory in real-value terms, enabling more efficient portfolio allocation by better capturing underlying real returns and cross-asset correlations.


\section{Model} \label{mo}

\subsection{Definition of the MLV} \label{mo-MLV}

For every pair of assets $i,j\in \{1, \ldots, N\}$\footnote{Let $N$ denote a sufficiently large number.}, let $p_{ij,t}$ denote the \textit{nominal value} (i.e., price) of asset $j$ at time $t$, expressed in units of $i$\footnote{The index $i$ typically represents a common unit of account such as USD and JPY.}.
This nominal value can be decomposed as:

\begin{equation}
    p_{ij,t} = \frac{v_{j,t}}{v_{i,t}},
\end{equation} 
        
\noindent
where $v_{i,t}$ and $v_{j,t}$ denote the \textit{real values} of assets $i$ and $j$ at time $t$.
In this case, the vector $(v_{1,t}, \ldots, v_{N,t})$ cannot be uniquely determined from observable price data, because at any given time, only $N - 1$ independent price points are available for $N$ assets\footnote{For example, when $N = 3$ assets $\{\mathrm{USD}, \mathrm{JPY}, \mathrm{XAU}\}$ are considered, three price pairs can be observed (e.g., USD/JPY, JPY/XAU, and XAU/USD). Since any one of these can be uniquely determined from the other two, the number of independent price data points is two.}, i.e., one degree of freedom is always missing.
To resolve this indeterminacy, the MLV estimates real values by analyzing the stochastic behavior of both nominal and real values through a maximum likelihood framework.

To streamline notation, let $\Delta p_{ij,t}\coloneqq \ln \left(p_{ij,t}/p_{ij,t-1} \right)$, $\Delta v_{i,t}\coloneqq \ln \left( v_{i,t}/v_{i,t-1} \right)$, and $\mathbf{\Delta v}_{t} \coloneqq (\Delta v_{1,t}, \ldots, \\ \Delta v_{N,t})^{\top}$.


\begin{definition}[The MLV] \label{defMLV}
The MLV of asset $i$ at time $t$ is the value of $\Delta v_{i,t}$ that maximizes the joint probability density of $\mathbf{\Delta v}_{t}$.
\end{definition}

\noindent
In other words, the MLV provides an estimate of each asset's real log return\footnote{\citet{hovanov2004computing} can be interpreted as a model that computes $\mathbf{\Delta v}_{t}$ by imposing the constraint $\sum_{i=1}^{N}\Delta v_{i,t} = 0$ for all $t$, thereby eliminating the remaining degree of freedom.}.
For clarity, we denote this maximizer by $\Delta \widehat{v}_{i,t}$ and collect these estimates in $\mathbf{\Delta \widehat{v}}_{t} \coloneq (\Delta \widehat{v}_{1,t}, \ldots, \Delta \widehat{v}_{N,t})^{\top}$.
If the sequence $\{\mathbf{\Delta \widehat{v}}_{t}\}_{t=1}^{T}$ is available, then a volatility-free unit of account can be constructed over $t=0, \ldots, T$.

\begin{definition}[The MLV as a Unit of Account] \label{defMLVsta}
The MLV as a unit of account fixes $v_{i,0} = 1$ and measures the real value of all assets by using the sequence $\{\mathbf{\Delta \widehat{v}}_{t}\}_{t=1}^{T}$.
\end{definition}

\noindent
A specific example is provided below.
Note that the choice of reference asset $i$ is immaterial, since it only scales all MLV-denominated values by a constant factor without affecting their relative ratios.

\begin{center}
\begin{tcolorbox}[
  width=0.85\linewidth,
  title={\em Example: The MLV as a Unit of Account}
]

We illustrate the concept by using the following observable price data:

\vspace{0.5\baselineskip}
{
\begin{tabular}{llll}

\toprule
                 & \textbf{$t=0$} & \textbf{$t=1$} & \textbf{$t=2$} \\ \midrule
$p_{\mathrm{JPYUSD}}$ & 150          & 100          & 200          \\ 
$p_{\mathrm{XAUJPY}}$ & 600,00$0^{-1}$        & 500,00$0^{-1}$          & 1,200,00$0^{-1}$         \\ 
$p_{\mathrm{USDXAU}}$ & 4,000         & 5,000         & 6,000        \\ \bottomrule
\end{tabular}
}
\vspace{0.7\baselineskip}

\noindent 
Assume that we estimate the following MLVs:

\vspace{0.5\baselineskip}
{
\begin{tabular}{llll}

\toprule
                 & \textbf{$t=0$} & \textbf{$t=1$} & \textbf{$t=2$} \\ \midrule
$\Delta \widehat{v}_{\mathrm{USD}}$ & -         & $-0.4$          & $+0.7$          \\ 
$\Delta \widehat{v}_{\mathrm{JPY}}$ & -        & $+0.1$          & $-0.9$         \\ 
$\Delta \widehat{v}_{\mathrm{XAU}}$ & -         & $+0.3$         & $+0.2$        \\ \bottomrule
\end{tabular}
}

\vspace{0.7\baselineskip}

\noindent    
If we set $v_{\mathrm{USD},0} = 1$, then the real values of all assets are measured as follows:

\vspace{0.5\baselineskip}
{
\begin{tabular}{llll}

\toprule
                 & \textbf{$t=0$} & \textbf{$t=1$} & \textbf{$t=2$} \\ \midrule
$v_{\mathrm{USD}}$ & $1$          & $1\cdot e^{-0.4}$          & $1\cdot e^{-0.4+0.7}$          \\ 
$v_{\mathrm{JPY}}$ & $150^{-1}$        & $150^{-1}\cdot e^{0.1}$          & $150^{-1}\cdot e^{0.1-0.9}$         \\ 
$v_{\mathrm{XAU}}$ & 4,000         & 4,00$0\cdot e^{0.3}$         & 4,00$0\cdot e^{0.3+0.2}$        \\ \bottomrule
\end{tabular}
}
\end{tcolorbox}
    
\end{center}

\subsection{Derivation of the MLV} \label{de-MLV}

$\Delta \widehat{v}_{i,t}$ can be derived by imposing the following assumption on the stochastic behavior of $(v_{1,t}, \ldots, v_{N,t})$\footnote{Strictly speaking, the derivation requires two additional assumptions: the no-arbitrage condition and the law of one price.}.

\begin{assumption}[Elliptical Distribution] \label{as-Ell}
The distribution of $\mathbf{\Delta v}_{t}$ belongs to the family of elliptical distributions\footnote{The family of elliptical distributions includes the multivariate normal distribution as a special case, as well as many non-normal multivariate distributions such as the multivariate Cauchy, multivariate exponential, and multivariate $t$ distributions.}.
\end{assumption}

\noindent
Elliptical distributions have the following form for their joint probability density, which is maximized when $\mathbf{x}^{\top} \Sigma^{-1} \mathbf{x}$ is minimized:

\begin{equation}
\label{eqn:ellip}
    f(\mathbf{x}, \Sigma) = k\cdot g \left( \mathbf{x}^{\top} \Sigma^{-1} \mathbf{x} \right),
\end{equation}

\noindent
where $\mathbf{x}$ is a random variable vector, 
$\Sigma$ is a positive semi-definite matrix characterizing the covariance structure of $\mathbf{x}$, 
$k$ is a normalization constant, 
and $g (\cdot)$ is a generator function. 

Substituting the corresponding variables into \eqref{eqn:ellip} yields:

\begin{equation}
\label{eqn:dist}
\begin{aligned}
    f(\mathbf{\Delta v}_{t}, \Sigma_{t}) &= f(\mathbf{\Delta p}_{i,t} + \Delta v_{i,t}\mathbf{1}, \Sigma_{t}) \\
     &= k\cdot g \left( (\mathbf{\Delta p}_{i,t} + \Delta v_{i,t}\mathbf{1})^{\top} \Sigma_{t}^{-1} (\mathbf{\Delta p}_{i,t} + \Delta v_{i,t}\mathbf{1}) \right),
\end{aligned}
\end{equation}

\noindent
where $\Sigma_{t}$ is an $N \times N$ matrix representing the covariance structure of $\mathbf{\Delta v}_{t}$\footnote{Note that $\Sigma_t$ is not restricted to the conventional covariance matrix. Instead, this derivation adopts \textit{Tyler’s scatter matrix} \cite{tyler1987distribution}, a generalization of the covariance matrix that is well suited to elliptical distributions of asset returns, particularly those exhibiting heavy-tailed behavior (see Appendix~\ref{app-sigma} for details). In contrast, the experiments in Section \ref{ex1} adopt the correlation matrix as a representation of $\Sigma_t$ that abstracts from scale information.}, $\mathbf{\Delta p}_{i,t} \coloneq (\Delta p_{i1,t}, \ldots,\\ \Delta p_{iN,t})^{\top}$ denotes the vector of log-return prices of all assets at time $t$ in units of $i$, and $\mathbf{1} \coloneq (1, \ldots,1)^{\top}$ is the $N$-dimensional all-one vector.
 

To find $\Delta v_{i,t}$ that maximizes \eqref{eqn:dist}, we solve the following optimization problem:

\begin{equation}
\label{eqn:opt}
    \min_{\Delta v_{i,t}} \ (\mathbf{\Delta p}_{i,t} + \Delta v_{i,t}\mathbf{1})^{\top} \Sigma_{t}^{-1} (\mathbf{\Delta p}_{i,t} + \Delta v_{i,t}\mathbf{1}).
\end{equation}


\noindent
The analytical solution of \eqref{eqn:opt} is:

\begin{equation}
\label{eqn:mlv}
    \Delta \widehat{v}_{i,t} = -\frac{\mathbf{\Delta p}_{i,t}^\top \Sigma_{t}^{-1} \mathbf{1}}{\mathbf{1}^\top \Sigma_{t}^{-1} \mathbf{1}}.
\end{equation}    

\noindent
Thus, $\Delta \widehat{v}_{i,t}$ is solely determined by $\mathbf{\Delta p}_{i,t}$ and $\Sigma_{t}$.
Since \eqref{eqn:mlv} enables an iterative procedure that estimates $\Sigma_t$ from past observable price log-returns (see Appendix~\ref{app-sigma} for details), $\Delta \widehat{v}_{i,t}$ can be entirely computed from observable price data.

Once $\Delta \widehat{v}_{i,t}$ is obtained, the vector $\mathbf{\Delta \widehat{v}}_{t}$ is given by $\mathbf{\Delta \widehat{v}}_{t} = \mathbf{\Delta p}_{i,t} + \Delta \widehat{v}_{i,t}\mathbf{1}$.
The full sequence $\{\mathbf{\Delta \widehat{v}}_{t}\}_{t=1}^{T}$ is then constructed by applying \eqref{eqn:mlv} for $t=1, \ldots, T$.





\subsection{Stability of the MLV} \label{mo-sta}
The stability of the MLV is not only supported by inferential statistics, but also by modern portfolio theory.
Consider a portfolio composed of $N$ distinct assets. 
Under the mean-variance framework of modern portfolio theory, the minimum-variance portfolio is defined as the solution to the following optimization problem, subject to the full-investment constraint:

\begin{equation}
\label{eqn:mvp}
    \min_{\mathbf{w}} \ \mathbf{w}^\top \Sigma \mathbf{w} 
    \quad \text{s.t.} \quad \mathbf{w}^\top \mathbf{1} = 1,
\end{equation}

\noindent
where $\mathbf{w}$ is the $N$-dimensional weight vector of assets, and $\Sigma$ denotes the covariance structure of asset returns.
The analytical solution of \eqref{eqn:mvp} is:

\begin{equation}
\label{eqn:mvpw}
    \mathbf{w}^* = \frac{\Sigma^{-1} \mathbf{1}}{\mathbf{1}^\top \Sigma^{-1} \mathbf{1}}.
\end{equation}

\noindent
This portfolio achieves the lowest possible risk among all the fully invested portfolios, regardless of expected returns.

Using \eqref{eqn:mvpw}, we can derive the following theorem regarding the stability of the MLV.

\begin{theorem} \label{theoStaMLV}
The MLV is the unit of account that ensures the real return of the minimum-variance portfolio is zero.
\end{theorem}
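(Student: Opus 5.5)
The plan is to compute the real return of the minimum-variance portfolio directly, using the MLV-implied real log returns, and show that it vanishes identically. Throughout we identify the covariance structure $\Sigma$ in \eqref{eqn:mvp} with the $\Sigma_t$ used in \eqref{eqn:opt}. Both describe the joint behaviour of the real log returns $\mathbf{\Delta v}_t$, and the portfolio weights \eqref{eqn:mvpw} are invariant to rescaling $\Sigma$, so using a scatter or correlation matrix instead changes nothing. We also work with log returns, so that the (first-order) real return of a portfolio with weights $\mathbf{w}$ is $\mathbf{w}^\top \mathbf{\Delta v}_t$. This is the standard linearization used in the mean-variance framework.

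\textbf{Step 1.} Under the MLV, the real return vector is $\mathbf{\Delta \widehat{v}}_t = \mathbf{\Delta p}_{i,t} + \Delta \widehat{v}_{i,t}\mathbf{1}$, with $\Delta \widehat{v}_{i,t}$ given by \eqref{eqn:mlv}. Substituting $\mathbf{w}^*$ from \eqref{eqn:mvpw} and using the symmetry of $\Sigma_t^{-1}$, so that $\mathbf{1}^\top\Sigma_t^{-1}\mathbf{\Delta p}_{i,t} = \mathbf{\Delta p}_{i,t}^\top\Sigma_t^{-1}\mathbf{1}$, gives
\begin{equation*}
\mathbf{w}^{*\top}\mathbf{\Delta \widehat{v}}_t = \frac{\mathbf{1}^\top\Sigma_t^{-1}\mathbf{\Delta p}_{i,t}}{\mathbf{1}^\top\Sigma_t^{-1}\mathbf{1}} + \Delta \widehat{v}_{i,t}\,\frac{\mathbf{1}^\top\Sigma_t^{-1}\mathbf{1}}{\mathbf{1}^\top\Sigma_t^{-1}\mathbf{1}} = \frac{\mathbf{\Delta p}_{i,t}^\top\Sigma_t^{-1}\mathbf{1}}{\mathbf{1}^\top\Sigma_t^{-1}\mathbf{1}} + \Delta \widehat{v}_{i,t} = 0 .
\end{equation*}
Since this holds for every $t$, the real value of the minimum-variance portfolio stays constant in MLV units.

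\textbf{Step 2.} For the converse, suppose $c$ is any unit-of-account choice for asset $i$, so that the real returns are $\mathbf{\Delta p}_{i,t}+c\mathbf{1}$. Because $\mathbf{w}^{*\top}\mathbf{1}=1$, the real return of the minimum-variance portfolio is then $\mathbf{w}^{*\top}\mathbf{\Delta p}_{i,t} + c$. This quantity is affine in $c$ with slope one, so it vanishes only at $c=\Delta\widehat{v}_{i,t}$, and the MLV is the unique such unit. Equivalently, the first-order condition of \eqref{eqn:opt}, namely $\mathbf{1}^\top\Sigma_t^{-1}(\mathbf{\Delta p}_{i,t}+\Delta v_{i,t}\mathbf{1})=0$, is exactly the statement $\mathbf{w}^{*\top}\mathbf{\Delta v}_t=0$. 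This gives a clean restatement: maximizing the likelihood is the same as pinning the minimum-variance portfolio's real return at zero.

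The main obstacle is interpretive rather than computational. The statement has to be formalized precisely: $\Sigma$ in the portfolio problem must be the same matrix as $\Sigma_t$, which needs invertibility, so positive definiteness must be assumed rather than the semi-definiteness stated earlier. "Real return" must also be read as the weighted log return. I would state these conventions explicitly before the computation, noting that the result is invariant to the choice of reference asset $i$ because $\mathbf{\Delta \widehat{v}}_t$ does not depend on $i$.
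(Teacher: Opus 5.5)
Your proposal is correct and its Step 1 is the paper's proof run in the opposite direction: the paper substitutes $\mathbf{\Delta p}_{i,t}=\mathbf{\Delta \widehat{v}}_{t}-\Delta \widehat{v}_{i,t}\mathbf{1}$ into \eqref{eqn:mlv} and rearranges to $\mathbf{\Delta \widehat{v}}_{t}^\top\mathbf{w}_t^*=0$. Your Step 2 adds uniqueness via the slope-one dependence on $c$, equivalently the first-order condition of \eqref{eqn:opt}, which the paper only asserts in its closing remark; your caveats on positive definiteness of $\Sigma_t$ and on reading the portfolio's real return as the weighted log return make explicit conventions the paper uses implicitly.
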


\begin{proof}
Expanding \eqref{eqn:mlv} with respect to $\mathbf{\Delta p}_{i,t}$ gives:
\begin{equation}
    \Delta \widehat{v}_{i,t} = -\frac{(\mathbf{\Delta \widehat{v}}_{t} - \Delta \widehat{v}_{i,t} \mathbf{1})^\top \Sigma_{t}^{-1} \mathbf{1}}{\mathbf{1}^\top \Sigma_{t}^{-1} \mathbf{1}},
\end{equation}

\noindent
which, when rearranged, yields:
\begin{equation}
\label{eqn:solv}
    \frac{\mathbf{\Delta \widehat{v}}_{t}^\top \Sigma_{t}^{-1} \mathbf{1}}{\mathbf{1}^\top \Sigma_{t}^{-1} \mathbf{1}} = 0.
\end{equation}

\noindent
From \eqref{eqn:mvpw}, \eqref{eqn:solv} can be rewritten as follows:
\begin{equation}
    \mathbf{\Delta \widehat{v}}_{t}^\top \mathbf{w}_t^* = 0,
\end{equation}

\noindent
where the left-hand side represents the MLV-denominated return of the minimum-variance portfolio, which equals zero for any $t$. 
\end{proof}

\noindent
Given the mean-variance tradeoff of asset returns, this finding implies that the MLV is the most stable unit of account under modern portfolio theory.
In other words, estimating unobservable real-value fluctuations via maximum likelihood is equivalent to defining them such that the real return of the minimum-variance portfolio is always zero.

\subsection{MLV-based Portfolio Optimization} \label{mo-manage}

It is worth noting that the MLV enhances portfolio optimization because $\Sigma_{t}$, defined for $\mathbf{\Delta v}_{t}$ rather than $\mathbf{\Delta p}_{i,t}$, accurately captures the covariance structure of asset returns and thereby enables a faithful implementation of modern portfolio theory.

In practice, portfolio rebalancing at time $t$ is formulated as the following optimization problem:

\begin{equation}
\label{eqn:mlvpo}
\begin{aligned}
\max_{\mathbf{w}_t}\quad
& \widehat{\boldsymbol{\mu}}_{i,t}^\top \mathbf{w}_t
  - C \lVert \mathbf{w}_t - \mathbf{w}_{t-1} \rVert_{1} \\
\text{s.t.}\quad
& \mathbf{w}_t^\top \mathbf{1} = 1, \\
& \mathbf{w}_t^\top \Sigma_t \mathbf{w}_t \le \theta .
\end{aligned}
\end{equation}

\noindent
where $\widehat{\boldsymbol{\mu}}_{i, t}$ denotes the estimated expected dividend return vector for each asset at time $t$ in units of $i$ (see Appendix \ref{app-mu} for the formal definition)\footnote{Since portfolio weights $\mathbf{w}_t$ for the interval $(t, t+1\rbrack$ are chosen \textit{ex ante} at time $t$, portfolio optimization in practice is typically formulated as the maximization of $\widehat{\boldsymbol{\mu}}_{i, t}^\top \mathbf{w}_t$, rather than $\mathbf{\Delta p}_{i,t}^\top \mathbf{w}_t$ or $\mathbf{\Delta \widehat{v}}_{t}^\top \mathbf{w}_t$, due to the greater reliability of estimating expected returns \cite{shiller1981stock}.},
$C$ is the rebalancing cost coefficient proportional to the $L^1$ norm of $\mathbf{w}_t - \mathbf{w}_{t-1}$, and
$\theta$ is the risk-tolerance threshold\footnote{In practice, the risk constraint can be formulated using alternative downside risk measures, such as CVaR (Section \ref{ex2}).}.
The performance of this MLV-based portfolio optimization is evaluated empirically in Section \ref{ex2}.


\section{Experiments} \label{ex}

MLV-pegged stablecoins, issued according to the scheme in Figure \ref{fig:outview}, are feasible if (i) the MLV can be computed in real time and (ii) MLV-based portfolio optimization can generate sufficient returns for users.
We evaluated both of these conditions through a series of experiments.

\subsection{Real-time Computability of the MLV} \label{ex1}

The first experiment evaluates the real-time computability of the MLV by examining how many observable assets are required to accurately estimate it in a rolling, online setting.

\paragraph{Data}
Multivariate $t$-distributed log-return time series of length $T=10N$ are generated by using a random $N \times N$ correlation matrix. 
These series represent the unobservable real log-returns for $N$ assets, serving as the ground truth.
The corresponding observable price log-returns, $\{\mathbf{\Delta p}_{i,t}\}_{t=1}^{10N}$, are then constructed by fixing the first asset as the unit of account (i.e., $i=1$) and subtracting its series from those of all other assets at each time step.
In this experiment, $N$ is set to $500$.

\paragraph{Methods}
For each run, $n$ assets are randomly selected once from the universe of $N$.
Using the corresponding $n$-dimensional price log-returns extracted from $\mathbf{\Delta p}_{1,t}$, the MLV vector $\mathbf{\Delta \widehat{v}}_{t}$ and its correlation matrix $\Sigma_t$\footnote{For simplicity, this experiment represents $\Sigma_t$ using correlation matrices, thereby abstracting from scale information. Correlations are computed using Kendall's $\tau$–based rank-correlation estimator to account for the heavy-tailed nature of asset returns.} are estimated.
Estimation is performed at the end of a rolling window of length $3N$, yielding $7N+1$ correlation-matrix estimates $\{\Sigma_t\}_{t=3N}^{10N}$ per run.
As ground truth, another set of $7N+1$ correlation-matrix estimates $\{\Sigma_t^{\mathrm{true}}\}_{t=3N}^{10N}$ is constructed per run, where each $\Sigma_t^{\mathrm{true}}$ is computed from the real log-returns of all $N$ assets over a rolling window ending at time $t$ (i.e., $3N^2$ data points).
Accuracy is evaluated using the average \textit{Root-Mean-Square Error} (RMSE) between $\Sigma_t$ and $\Sigma_t^{\mathrm{true}}$ across the $7N+1$ estimates.
This procedure is repeated 10 times for each $n \in \{50,100,150,\ldots,500\}$.

\paragraph{Results}
As shown in Figure \ref{fig:rmse_corr_vs_assets_mlv}, the estimation error rapidly decreases as the number of observed assets increases and stabilizes beyond approximately $n=200$. 
This indicates diminishing accuracy gains with additional observations.
From a practical perspective, observing on the order of $n\approx 500$ assets is sufficient. 
Given that the dominant computational cost of each rolling update scales as $O(n^3)$, this observation level allows MLV to be computed in real time on standard computing hardware.

\begin{figure}
  \centering
  \includegraphics[width=0.6\linewidth]{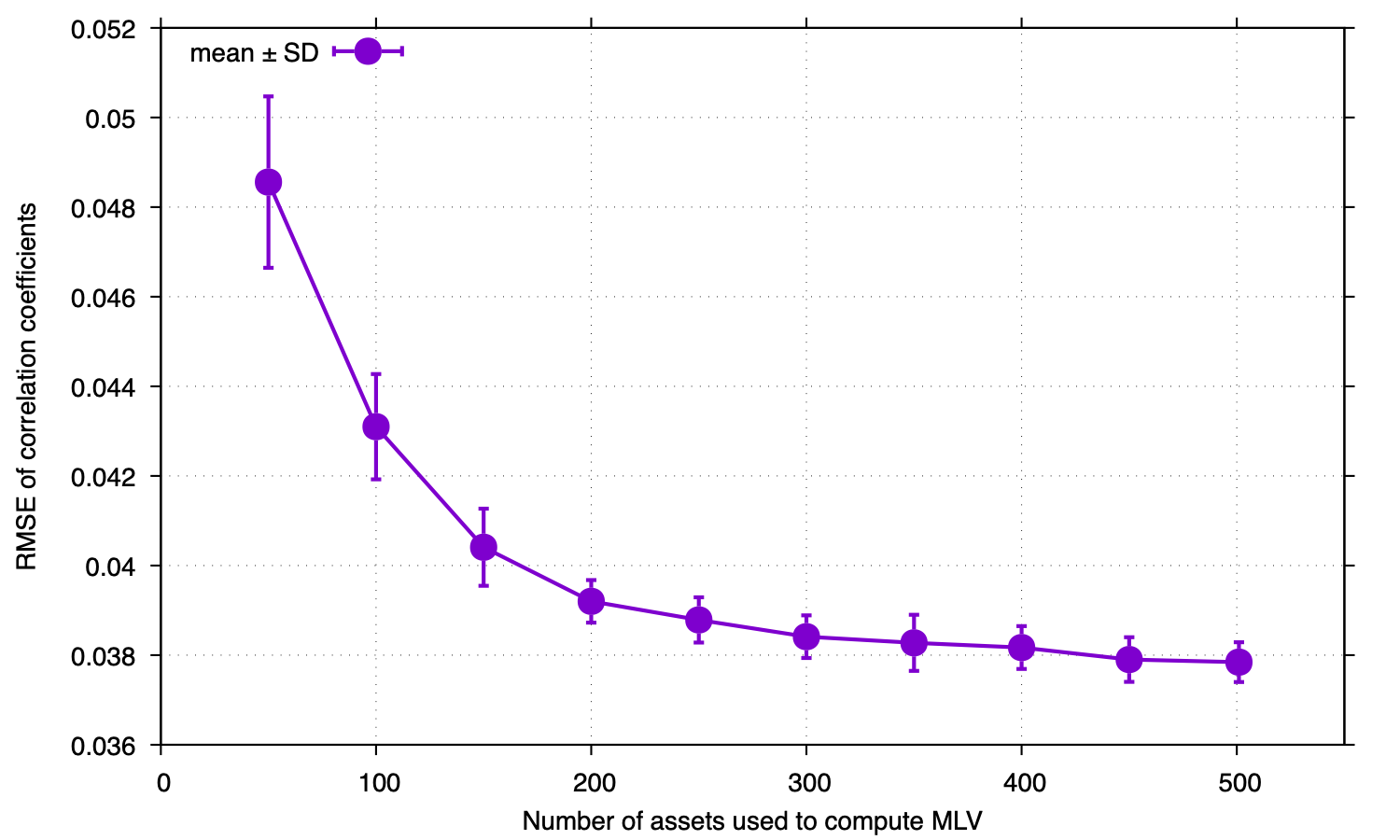}
  \caption{
  Mean RMSE (with standard error) between the MLV-based and the ground-truth correlation matrices, shown as a function of the number of observed assets $n$. RMSE is averaged over the rolling windows per run, and the mean and standard error are computed across 10 independent runs.
  }
  \label{fig:rmse_corr_vs_assets_mlv}
\end{figure}

\subsection{Performance of MLV-based Portfolio Optimization} \label{ex2} 

The second experiment evaluates the performance of MLV-based portfolio optimization by comparing the empirical outcome of \eqref{eqn:mlvpo} across USD- and MLV-based risk and evaluation spaces.

\paragraph{Data}
U.S.\ equities from the S\&P~500 universe are used for this experiment.
At each rebalancing date, set at five week intervals, the collateral portfolio is constructed by selecting high-dividend stocks based on their trailing 12-month dividend yields.
A broader set of equities ($n \approx 500$) from the same S\&P~500 is used to estimate the MLV time series.
Dividend payments are treated as nominal cash flows and aggregated at the portfolio level.
The sample period covers June 2019 through January 2026.

\paragraph{Methods}
At each rebalancing date $t$, portfolio weights $\mathbf{w}_t$ are determined by solving the dividend-maximization problem \eqref{eqn:mlvpo} under an explicit downside risk constraint, with the rebalancing cost coefficient set to $C = 0.01$ and dividend returns expressed in USD, i.e., $\widehat{\boldsymbol{\mu}}_{\mathrm{USD},t}$.
For practicality, downside risk is controlled not by the variance constraint $\theta$ but by a weekly \textit{Conditional Value at Risk} (CVaR) constraint, $\mathrm{CVaR}_{0.95} \ge -0.04$\footnote{Specifically, portfolio returns (asset price changes and dividend payments) over the past 100 weeks are ranked, and the average of the worst 5\% outcomes is constrained to be no less than $-4\%$.}.
This experiment compares two risk spaces by computing CVaR with covariance structures derived from either USD-based or MLV-based returns, while all other parameters are held constant.
For each risk space, portfolio performance is evaluated in both USD and MLV spaces, where all metrics (annualized return, annualized volatility, Sharpe ratio, maximum drawdown, and portfolio turnover) are computed in the corresponding unit of account\footnote{Since dividend returns are expressed in USD, i.e., $\widehat{\boldsymbol{\mu}}_{\mathrm{USD},t}$, the difference in evaluation space reflects whether the same portfolio optimization outcome is evaluated in USD or in MLV.}.

\begin{table}[t]
\centering
\caption{Performance of dividend-maximizing portfolios under CVaR constraints}
\label{tab:divmax_cvar}
\small
\begin{subtable}[t]{\textwidth}
\centering
\caption{Evaluated in the USD space}
\label{tab:divmax_cvar_a}
\begin{tabular}{lccccc}
\toprule
Risk Space & Ann.\ Return & Ann.\ Volatility & Sharpe Ratio & Max Drawdown & Turnover \\
\midrule
USD & 0.117 & 0.211 & 0.554 & -0.376 & 0.529 \\
\rowcolor{gray!10}
MLV   & 0.156 & 0.223 & 0.699 & -0.466 & 0.133 \\
\bottomrule
\end{tabular}
\end{subtable}

\vspace{1.0\baselineskip}

\begin{subtable}[t]{\textwidth}
\centering
\caption{Evaluated in the MLV space}
\label{tab:divmax_cvar_b}
\begin{tabular}{lccccc}
\toprule
Risk Space & Ann.\ Return & Ann.\ Volatility & Sharpe Ratio & Max Drawdown & Turnover \\
\midrule
USD & 0.037 & 0.132 & 0.281 & -0.220 & 0.527 \\
\rowcolor{gray!10}
MLV   & 0.053 & 0.116 & 0.461 & -0.148 & 0.137 \\
\bottomrule
\end{tabular}
\end{subtable}

\end{table}

\paragraph{Results}
Table \ref{tab:divmax_cvar} reports the performance of portfolio optimization across different combinations of risk and evaluation spaces.
In the USD evaluation, using the MLV as the risk space increases the annualized return from 11.7\% to 15.6\% and the Sharpe ratio from 0.554 to 0.699, while substantially reducing portfolio turnover from 0.529 to 0.133.
In the MLV evaluation, a comparable reduction in turnover is observed, while the percentage increases in annualized return and Sharpe ratio are even larger than those under USD evaluation\footnote{The annualized returns in the MLV evaluation (3.7\% and 5.3\%) are lower than those in the USD evaluation (11.7\% and 15.6\%). This difference likely reflects the depreciation of the USD over the sample period from June 2019 to January 2026.}.
Overall, these results demonstrate that MLV-based portfolio optimization improves return performance and rebalancing stability under CVaR constraints, relative to USD-based portfolio optimization.

\section{Conclusion} \label{co}

This study proposed stablecoins pegged to the MLV, a new unit of account that infers unobservable real-value movements from financial asset prices.
We first theoretically established that the MLV is the most stable unit of account from the perspectives of both inferential statistics and modern portfolio theory (Section \ref{mo}).
We then experimentally demonstrated that the MLV can be computed in real time and can improve portfolio optimization performance (Section \ref{ex}).
These results indicate that stablecoins can be redefined as truly stable assets, i.e., assets pegged to real rather than nominal values.
Finally, we identify several promising directions for future work.

\paragraph{Refining the Derivation of the MLV}
The derivation of the MLV (Section \ref{de-MLV}) relies on the stochastic behavior of $(v_{1,t}, \ldots, v_{N,t})$ and the accuracy of $\Sigma_t$. 
Assumption \ref{as-Ell} regarding the stochastic distribution is not overly restrictive, since many commonly used distributions belong to the elliptical family, although further relaxation may be possible.
It is also important to develop more accurate and real-time methods for estimating $\Sigma_t$ than the approach described in Appendix \ref{app-sigma}\footnote{For example, \citet{hayashi2005covariance} proposed a covariance estimator for non-synchronously observed high-frequency data, addressing biases that arise when asset prices are observed at different trading times.}. 
Such refinements can enhance the MLV's ability to capture stylized features of financial markets, including \textit{tail dependence} \cite{hauksson2001multivariate} and \textit{asymmetric dependence} \cite{longin2001extreme}\footnote{In risk evaluation, \textit{copulas} \cite{nelsen2006introduction} are frequently used to model these stylized features. However, in the derivation of the MLV, where $\Sigma_t$ explicitly appears in the analytical solution (5), pursuing accurate and real-time updates of $\Sigma_t$ constitutes a more consistent approach.}.

\paragraph{Making MLV-based Portfolio Optimization More Practical}
Several practical challenges remain in implementing MLV-based portfolio optimization (Section~\ref{mo-manage}).
One challenge involves price data acquisition and portfolio rebalancing under region-specific trading hours and time zones, which may draw on insights from existing studies of asynchronous trading and non-overlapping market hours \cite{eun1989international, baumohl2011stock}.
Another challenge concerns the optimal allocation of realized returns among the portfolio, users, and the operator; addressing this challenge may draw on existing studies of fee and compensation structures in mutual and hedge funds that analyze how performance is shared among stakeholders \cite{elton2003incentive, escobar2018optimal, ben2020performance}.

\paragraph{Decentralizing the Issuance of MLV-pegged Stablecoins}
MLV-pegged stablecoins may be issued (Figure \ref{fig:outview}) without the operator. 
Such decentralization enhances the robustness of the issuance scheme by eliminating single points of failure, while aligning with the design principles of the \textit{Bitcoin protocol} \cite{nakamoto2008bitcoin}.
Relevant precedents include the \textit{DAI stablecoin} \cite{dai2022maker}, which mitigates under-collateralization and incentivizes deposits without a centralized operator, as well as lessons learned from failed \textit{algorithmic stablecoins} \cite{al2017basis, kereiakes2019terra}.
More broadly, sustainable decentralization requires incentive mechanisms that are grounded in \textit{cryptoeconomics} and \textit{tokenomics} \cite{kensuke2024cryptoeconomics}.



\section*{Acknowledgment}
The authors gratefully acknowledge the \textit{Endowed Chair for Blockchain Innovation} and the \textit{Mohammed bin Salman Center for Future Science and Technology for Saudi-Japan Vision 2030} (MbSC2030) for their financial support.

\section*{Author Contribution}
T.K. conceived the main conceptual framework and developed the mathematical model. 
K.I. drafted the manuscript and contributed to the overall research design. 
Y.Y. assisted in model development. 
K.S. provided critical revisions. 

\bibliographystyle{ACM-Reference-Format}
\bibliography{sample-bibliography}


\begin{thebibliography}{52}


\ifx \showCODEN    \undefined \def \showCODEN     #1{\unskip}     \fi
\ifx \showISBNx    \undefined \def \showISBNx     #1{\unskip}     \fi
\ifx \showISBNxiii \undefined \def \showISBNxiii  #1{\unskip}     \fi
\ifx \showISSN     \undefined \def \showISSN      #1{\unskip}     \fi
\ifx \showLCCN     \undefined \def \showLCCN      #1{\unskip}     \fi
\ifx \shownote     \undefined \def \shownote      #1{#1}          \fi
\ifx \showarticletitle \undefined \def \showarticletitle #1{#1}   \fi
\ifx \showURL      \undefined \def \showURL       {\relax}        \fi
\providecommand\bibfield[2]{#2}
\providecommand\bibinfo[2]{#2}
\providecommand\natexlab[1]{#1}
\providecommand\showeprint[2][]{arXiv:#2}

\bibitem[Adler and Biger(1975)]%
        {adler1975assessment}
\bibfield{author}{\bibinfo{person}{Michael Adler} {and} \bibinfo{person}{Nahum
  Biger}.} \bibinfo{year}{1975}\natexlab{}.
\newblock \showarticletitle{The assessment of inflation and portfolio
  selection}.
\newblock \bibinfo{journal}{\emph{the Journal of Finance}}
  \bibinfo{volume}{30}, \bibinfo{number}{2} (\bibinfo{year}{1975}),
  \bibinfo{pages}{451--467}.
\newblock


\bibitem[Adler and Dumas(1983)]%
        {adler1983international}
\bibfield{author}{\bibinfo{person}{Michael Adler} {and}
  \bibinfo{person}{Bernard Dumas}.} \bibinfo{year}{1983}\natexlab{}.
\newblock \showarticletitle{International portfolio choice and corporation
  finance: A synthesis}.
\newblock \bibinfo{journal}{\emph{The journal of finance}}
  \bibinfo{volume}{38}, \bibinfo{number}{3} (\bibinfo{year}{1983}),
  \bibinfo{pages}{925--984}.
\newblock


\bibitem[Al-Naji et~al\mbox{.}(2017)]%
        {al2017basis}
\bibfield{author}{\bibinfo{person}{Nader Al-Naji}, \bibinfo{person}{Josh Chen},
  {and} \bibinfo{person}{Lawrence Diao}.} \bibinfo{year}{2017}\natexlab{}.
\newblock \showarticletitle{Basis: a price-stable cryptocurrency with an
  algorithmic central bank}.
\newblock \bibinfo{journal}{\emph{Basis. io}} (\bibinfo{year}{2017}).
\newblock


\bibitem[Ante et~al\mbox{.}(2023)]%
        {ante2023systematic}
\bibfield{author}{\bibinfo{person}{Lennart Ante}, \bibinfo{person}{Ingo
  Fiedler}, \bibinfo{person}{Jan~Marius Willruth}, {and} \bibinfo{person}{Fred
  Steinmetz}.} \bibinfo{year}{2023}\natexlab{}.
\newblock \showarticletitle{A systematic literature review of empirical
  research on stablecoins}.
\newblock \bibinfo{journal}{\emph{FinTech}} \bibinfo{volume}{2},
  \bibinfo{number}{1} (\bibinfo{year}{2023}), \bibinfo{pages}{34--47}.
\newblock


\bibitem[Arner et~al\mbox{.}(2020)]%
        {arner2020stablecoins}
\bibfield{author}{\bibinfo{person}{DW Arner}, \bibinfo{person}{R Auer}, {and}
  \bibinfo{person}{J Frost}.} \bibinfo{year}{2020}\natexlab{}.
\newblock \showarticletitle{Stablecoins: risks, potential and regulation (BIS
  Working Papers No. 905)}.
\newblock \bibinfo{journal}{\emph{Bank for International Settlements}}
  (\bibinfo{year}{2020}).
\newblock


\bibitem[{Banco Central de Chile}(2025)]%
        {bcentralIndicadoresDiarios}
\bibfield{author}{\bibinfo{person}{{Banco Central de Chile}}.}
  \bibinfo{year}{2025}\natexlab{}.
\newblock \bibinfo{title}{{I}ndicadores diarios --- si3.bcentral.cl}.
\newblock
  \bibinfo{howpublished}{\url{https://si3.bcentral.cl/indicadoressiete/secure/IndicadoresDiarios.aspx}}.
\newblock
\newblock
\shownote{[Accessed 01-10-2025]}.


\bibitem[{Banco de M\'{e}xico}(2025)]%
        {banxicoEstructuraInformacixF3n}
\bibfield{author}{\bibinfo{person}{{Banco de M\'{e}xico}}.}
  \bibinfo{year}{2025}\natexlab{}.
\newblock \bibinfo{title}{Estructura de informaci\'{o}n (SIE, Banco de
  {M}\'{e}xico) --- banxico.org.mx}.
\newblock
  \bibinfo{howpublished}{\url{https://www.banxico.org.mx/SieInternet/consultarDirectorioInternetAction.do?sector=8&accion=consultarCuadro&idCuadro=CP150&locale=es}}.
\newblock
\newblock
\shownote{[Accessed 07-10-2025]}.


\bibitem[Bank(2019)]%
        {europaMoneyPrivate}
\bibfield{author}{\bibinfo{person}{European~Central Bank}.}
  \bibinfo{year}{2019}\natexlab{}.
\newblock \bibinfo{title}{{M}oney and private currencies: reflections on
  {L}ibra --- ecb.europa.eu}.
\newblock
  \bibinfo{howpublished}{\url{https://www.ecb.europa.eu/press/key/date/2019/html/ecb.sp190902~aedded9219.en.html}}.
\newblock
\newblock
\shownote{[Accessed 26-09-2025]}.


\bibitem[Baum{\"o}hl and V{\`y}rost(2011)]%
        {baumohl2011stock}
\bibfield{author}{\bibinfo{person}{Eduard Baum{\"o}hl} {and}
  \bibinfo{person}{Tom{\'a}{\v{s}} V{\`y}rost}.}
  \bibinfo{year}{2011}\natexlab{}.
\newblock \showarticletitle{Stock Market Integration: Granger Causality Testing
  with Respect to Nonsynchronous Trading Effects.}
\newblock \bibinfo{journal}{\emph{Finance a Uver: Czech Journal of Economics \&
  Finance}} \bibinfo{volume}{61}, \bibinfo{number}{1} (\bibinfo{year}{2011}).
\newblock


\bibitem[Ben-David et~al\mbox{.}(2020)]%
        {ben2020performance}
\bibfield{author}{\bibinfo{person}{Itzhak Ben-David}, \bibinfo{person}{Justin
  Birru}, {and} \bibinfo{person}{Andrea Rossi}.}
  \bibinfo{year}{2020}\natexlab{}.
\newblock \bibinfo{booktitle}{\emph{The performance of hedge fund performance
  fees}}.
\newblock \bibinfo{type}{{T}echnical {R}eport}. \bibinfo{institution}{National
  Bureau of Economic Research}.
\newblock


\bibitem[{BoE}(2026)]%
        {bankofenglandBankEngland}
\bibfield{author}{\bibinfo{person}{{BoE}}.} \bibinfo{year}{2026}\natexlab{}.
\newblock \bibinfo{title}{{B}ank of {E}ngland {D}atabase ---
  wwwtest.bankofengland.co.uk}.
\newblock
  \bibinfo{howpublished}{\url{https://wwwtest.bankofengland.co.uk/boeapps/database/default.asp}}.
\newblock
\newblock
\shownote{[Accessed 26-01-2026]}.


\bibitem[Bullmann et~al\mbox{.}(2019)]%
        {bullmann2019search}
\bibfield{author}{\bibinfo{person}{Dirk Bullmann}, \bibinfo{person}{Jonas
  Klemm}, {and} \bibinfo{person}{Andrea Pinna}.}
  \bibinfo{year}{2019}\natexlab{}.
\newblock \bibinfo{booktitle}{\emph{In search for stability in crypto-assets:
  are stablecoins the solution?}}
\newblock Number 230. \bibinfo{publisher}{ECB Occasional Paper}.
\newblock


\bibitem[Chen et~al\mbox{.}(1986)]%
        {chen1986economic}
\bibfield{author}{\bibinfo{person}{Nai-Fu Chen}, \bibinfo{person}{Richard
  Roll}, {and} \bibinfo{person}{Stephen~A Ross}.}
  \bibinfo{year}{1986}\natexlab{}.
\newblock \showarticletitle{Economic forces and the stock market}.
\newblock \bibinfo{journal}{\emph{Journal of business}} (\bibinfo{year}{1986}),
  \bibinfo{pages}{383--403}.
\newblock


\bibitem[{Circle}(2018)]%
        {usdcUSDCWorlds}
\bibfield{author}{\bibinfo{person}{{Circle}}.} \bibinfo{year}{2018}\natexlab{}.
\newblock \bibinfo{title}{{U}{S}{D}{C} | {T}he world’s largest regulated
  digital dollar --- usdc.com}.
\newblock \bibinfo{howpublished}{\url{https://www.usdc.com/}}.
\newblock
\newblock
\shownote{[Accessed 10-09-2025]}.


\bibitem[{Diem Association}(2020)]%
        {diemDiemBlockchain}
\bibfield{author}{\bibinfo{person}{{Diem Association}}.}
  \bibinfo{year}{2020}\natexlab{}.
\newblock \bibinfo{title}{{T}he {D}iem {B}lockchain | {D}iem {D}ocumentation
  --- developers.diem.com}.
\newblock
  \bibinfo{howpublished}{\url{https://developers.diem.com/docs/technical-papers/the-diem-blockchain-paper/}}.
\newblock
\newblock
\shownote{[Accessed 17-09-2025]}.


\bibitem[Elton et~al\mbox{.}(2003)]%
        {elton2003incentive}
\bibfield{author}{\bibinfo{person}{Edwin~J Elton}, \bibinfo{person}{Martin~J
  Gruber}, {and} \bibinfo{person}{Christopher~R Blake}.}
  \bibinfo{year}{2003}\natexlab{}.
\newblock \showarticletitle{Incentive fees and mutual funds}.
\newblock \bibinfo{journal}{\emph{The Journal of Finance}}
  \bibinfo{volume}{58}, \bibinfo{number}{2} (\bibinfo{year}{2003}),
  \bibinfo{pages}{779--804}.
\newblock


\bibitem[Emmett et~al\mbox{.}(2023)]%
        {emmett2023flatcoins}
\bibfield{author}{\bibinfo{person}{Jeff Emmett}, \bibinfo{person}{Danilo~Lessa
  Bernardineli}, {and} \bibinfo{person}{Jamsheed Shorish}.}
  \bibinfo{year}{2023}\natexlab{}.
\newblock \showarticletitle{Flatcoins: Inflation-Adjusted Stablecoins}.
\newblock  (\bibinfo{year}{2023}).
\newblock


\bibitem[Escobar-Anel et~al\mbox{.}(2018)]%
        {escobar2018optimal}
\bibfield{author}{\bibinfo{person}{Marcos Escobar-Anel},
  \bibinfo{person}{Vincent H{\"o}hn}, \bibinfo{person}{Luis Seco}, {and}
  \bibinfo{person}{Rudi Zagst}.} \bibinfo{year}{2018}\natexlab{}.
\newblock \showarticletitle{Optimal fee structures in hedge funds}.
\newblock \bibinfo{journal}{\emph{Journal of Asset Management}}
  \bibinfo{volume}{19}, \bibinfo{number}{7} (\bibinfo{year}{2018}),
  \bibinfo{pages}{522--542}.
\newblock


\bibitem[Eun and Shim(1989)]%
        {eun1989international}
\bibfield{author}{\bibinfo{person}{Cheol~S Eun} {and} \bibinfo{person}{Sangdal
  Shim}.} \bibinfo{year}{1989}\natexlab{}.
\newblock \showarticletitle{International transmission of stock market
  movements}.
\newblock \bibinfo{journal}{\emph{Journal of financial and quantitative
  Analysis}} \bibinfo{volume}{24}, \bibinfo{number}{2} (\bibinfo{year}{1989}),
  \bibinfo{pages}{241--256}.
\newblock


\bibitem[Fama and French(1992)]%
        {fama1992cross}
\bibfield{author}{\bibinfo{person}{Eugene~F Fama} {and}
  \bibinfo{person}{Kenneth~R French}.} \bibinfo{year}{1992}\natexlab{}.
\newblock \showarticletitle{The cross-section of expected stock returns}.
\newblock \bibinfo{journal}{\emph{the Journal of Finance}}
  \bibinfo{volume}{47}, \bibinfo{number}{2} (\bibinfo{year}{1992}),
  \bibinfo{pages}{427--465}.
\newblock


\bibitem[Fisher(1913)]%
        {fisher1913compensated}
\bibfield{author}{\bibinfo{person}{Irving Fisher}.}
  \bibinfo{year}{1913}\natexlab{}.
\newblock \showarticletitle{A compensated dollar}.
\newblock \bibinfo{journal}{\emph{The Quarterly Journal of Economics}}
  \bibinfo{volume}{27}, \bibinfo{number}{2} (\bibinfo{year}{1913}),
  \bibinfo{pages}{213--235}.
\newblock


\bibitem[{Frax Finance}(2024)]%
        {fraxOverviewCPI}
\bibfield{author}{\bibinfo{person}{{Frax Finance}}.}
  \bibinfo{year}{2024}\natexlab{}.
\newblock \bibinfo{title}{{O}verview ({C}{P}{I} {P}eg \& {M}echanics) | {F}rax
  {F}inance ¤ --- docs.frax.finance}.
\newblock
  \bibinfo{howpublished}{\url{https://docs.frax.finance/frax-price-index/overview-cpi-peg-and-mechanics}}.
\newblock
\newblock
\shownote{[Accessed 17-09-2025]}.


\bibitem[Grobys et~al\mbox{.}(2025)]%
        {grobys2025stablecoin}
\bibfield{author}{\bibinfo{person}{Klaus Grobys}, \bibinfo{person}{Juha-Pekka
  Junttila}, {and} \bibinfo{person}{James~W Kolari}.}
  \bibinfo{year}{2025}\natexlab{}.
\newblock \showarticletitle{A Stablecoin That’s Actually Stable: A Portfolio
  Optimization Approach}.
\newblock \bibinfo{journal}{\emph{Journal of Financial Stability}}
  (\bibinfo{year}{2025}), \bibinfo{pages}{101458}.
\newblock


\bibitem[Hauksson et~al\mbox{.}(2001)]%
        {hauksson2001multivariate}
\bibfield{author}{\bibinfo{person}{HA Hauksson}, \bibinfo{person}{M Dacorogna},
  \bibinfo{person}{T Domenig}, \bibinfo{person}{U Mller}, {and}
  \bibinfo{person}{G Samorodnitsky}.} \bibinfo{year}{2001}\natexlab{}.
\newblock \showarticletitle{Multivariate extremes, aggregation and risk
  estimation}.
\newblock \bibinfo{journal}{\emph{Quantitative Finance}} \bibinfo{volume}{1},
  \bibinfo{number}{1} (\bibinfo{year}{2001}), \bibinfo{pages}{79--95}.
\newblock


\bibitem[Hayashi and Yoshida(2005)]%
        {hayashi2005covariance}
\bibfield{author}{\bibinfo{person}{Takaki Hayashi} {and}
  \bibinfo{person}{Nakahiro Yoshida}.} \bibinfo{year}{2005}\natexlab{}.
\newblock \showarticletitle{On covariance estimation of non-synchronously
  observed diffusion processes}.
\newblock \bibinfo{journal}{\emph{Bernoulli}} \bibinfo{volume}{11},
  \bibinfo{number}{2} (\bibinfo{year}{2005}), \bibinfo{pages}{359--379}.
\newblock


\bibitem[Ho(2012)]%
        {ho2012globalization}
\bibfield{author}{\bibinfo{person}{Lok~Sang Ho}.}
  \bibinfo{year}{2012}\natexlab{}.
\newblock \showarticletitle{Globalization, exports, and effective exchange rate
  indices}.
\newblock \bibinfo{journal}{\emph{Journal of International Money and Finance}}
  \bibinfo{volume}{31}, \bibinfo{number}{5} (\bibinfo{year}{2012}),
  \bibinfo{pages}{996--1007}.
\newblock


\bibitem[Ho(2018)]%
        {ho2018search}
\bibfield{author}{\bibinfo{person}{Lok~Sang Ho}.}
  \bibinfo{year}{2018}\natexlab{}.
\newblock \showarticletitle{In search of a unit of stable global purchasing
  power}.
\newblock \bibinfo{journal}{\emph{International Review of Economics \&
  Finance}}  \bibinfo{volume}{56} (\bibinfo{year}{2018}),
  \bibinfo{pages}{99--108}.
\newblock


\bibitem[Hovanov et~al\mbox{.}(2004)]%
        {hovanov2004computing}
\bibfield{author}{\bibinfo{person}{Nikolai~V Hovanov}, \bibinfo{person}{James~W
  Kolari}, {and} \bibinfo{person}{Mikhail~V Sokolov}.}
  \bibinfo{year}{2004}\natexlab{}.
\newblock \showarticletitle{Computing currency invariant indices with an
  application to minimum variance currency baskets}.
\newblock \bibinfo{journal}{\emph{Journal of Economic Dynamics and Control}}
  \bibinfo{volume}{28}, \bibinfo{number}{8} (\bibinfo{year}{2004}),
  \bibinfo{pages}{1481--1504}.
\newblock


\bibitem[{IMF}(2023)]%
        {imfWhatSDR}
\bibfield{author}{\bibinfo{person}{{IMF}}.} \bibinfo{year}{2023}\natexlab{}.
\newblock \bibinfo{title}{{W}hat is the {S}{D}{R}? --- imf.org}.
\newblock
  \bibinfo{howpublished}{\url{https://www.imf.org/en/About/Factsheets/Sheets/2023/special-drawing-rights-sdr}}.
\newblock
\newblock
\shownote{[Accessed 18-09-2025]}.


\bibitem[{IMF}(2026)]%
        {imfDataHome}
\bibfield{author}{\bibinfo{person}{{IMF}}.} \bibinfo{year}{2026}\natexlab{}.
\newblock \bibinfo{title}{{D}ata {H}ome --- data.imf.org}.
\newblock \bibinfo{howpublished}{\url{https://data.imf.org/en}}.
\newblock
\newblock
\shownote{[Accessed 26-01-2026]}.


\bibitem[Ito(2024)]%
        {kensuke2024cryptoeconomics}
\bibfield{author}{\bibinfo{person}{Kensuke Ito}.}
  \bibinfo{year}{2024}\natexlab{}.
\newblock \showarticletitle{Cryptoeconomics and tokenomics as economics: A
  survey with opinions}. In \bibinfo{booktitle}{\emph{2024 IEEE International
  Conference on Blockchain and Cryptocurrency (ICBC)}}. IEEE,
  \bibinfo{pages}{729--746}.
\newblock


\bibitem[Ito et~al\mbox{.}(2020)]%
        {ito2020stablecoin}
\bibfield{author}{\bibinfo{person}{Kensuke Ito}, \bibinfo{person}{Makiko Mita},
  \bibinfo{person}{Shohei Ohsawa}, {and} \bibinfo{person}{Hideyuki Tanaka}.}
  \bibinfo{year}{2020}\natexlab{}.
\newblock \showarticletitle{What is stablecoin?: A survey on its mechanism and
  potential as decentralized payment systems}.
\newblock \bibinfo{journal}{\emph{International Journal of Service and
  Knowledge Management}} \bibinfo{volume}{4}, \bibinfo{number}{2}
  (\bibinfo{year}{2020}), \bibinfo{pages}{71--86}.
\newblock


\bibitem[Kereiakes et~al\mbox{.}(2019)]%
        {kereiakes2019terra}
\bibfield{author}{\bibinfo{person}{Evan Kereiakes}, \bibinfo{person}{Marco
  Di~Maggio Do~Kwon}, {and} \bibinfo{person}{Nicholas Platias}.}
  \bibinfo{year}{2019}\natexlab{}.
\newblock \showarticletitle{Terra money: Stability and adoption}.
\newblock \bibinfo{journal}{\emph{White Paper, Apr}} (\bibinfo{year}{2019}).
\newblock


\bibitem[Ling et~al\mbox{.}(2025)]%
        {ling2025sok}
\bibfield{author}{\bibinfo{person}{Shengchen Ling}, \bibinfo{person}{Yuefeng
  Du}, \bibinfo{person}{Yajin Zhou}, \bibinfo{person}{Lei Wu},
  \bibinfo{person}{Cong Wang}, \bibinfo{person}{Xiaohua Jia}, {and}
  \bibinfo{person}{Houmin Yan}.} \bibinfo{year}{2025}\natexlab{}.
\newblock \showarticletitle{SoK: Stablecoin Designs, Risks, and the Stablecoin
  LEGO}.
\newblock \bibinfo{journal}{\emph{arXiv preprint arXiv:2506.17622}}
  (\bibinfo{year}{2025}).
\newblock


\bibitem[Longin and Solnik(2001)]%
        {longin2001extreme}
\bibfield{author}{\bibinfo{person}{Francois Longin} {and}
  \bibinfo{person}{Bruno Solnik}.} \bibinfo{year}{2001}\natexlab{}.
\newblock \showarticletitle{Extreme correlation of international equity
  markets}.
\newblock \bibinfo{journal}{\emph{The journal of finance}}
  \bibinfo{volume}{56}, \bibinfo{number}{2} (\bibinfo{year}{2001}),
  \bibinfo{pages}{649--676}.
\newblock


\bibitem[Louw(1987)]%
        {louw1987ecu}
\bibfield{author}{\bibinfo{person}{Andr{\'e} Louw}.}
  \bibinfo{year}{1987}\natexlab{}.
\newblock \showarticletitle{The ECU: facts and prospects}.
\newblock \bibinfo{journal}{\emph{Revue de l’euro}} \bibinfo{number}{51}
  (\bibinfo{year}{1987}).
\newblock


\bibitem[{MakerDAO}(2022)]%
        {dai2022maker}
\bibfield{author}{\bibinfo{person}{{MakerDAO}}.}
  \bibinfo{year}{2022}\natexlab{}.
\newblock \bibinfo{title}{The Maker Protocol: MakerDAO's Multi-Collateral Dai
  (MCD) System}.
\newblock
\urldef\tempurl%
\url{https://makerdao.com/whitepaper/White%20Paper%20-The%20Maker%20Protocol_%20MakerDAO's%20Multi-Collateral%20Dai%20(MCD)%20System-FINAL-%20021720.pdf}
\showURL{%
\tempurl}
\newblock
\shownote{[Accessed 11-09-2025]}.


\bibitem[Makin(1978)]%
        {makin1978portfolio}
\bibfield{author}{\bibinfo{person}{John~H Makin}.}
  \bibinfo{year}{1978}\natexlab{}.
\newblock \showarticletitle{Portfolio theory and the problem of foreign
  exchange risk}.
\newblock \bibinfo{journal}{\emph{The Journal of Finance}}
  \bibinfo{volume}{33}, \bibinfo{number}{2} (\bibinfo{year}{1978}),
  \bibinfo{pages}{517--534}.
\newblock


\bibitem[Manaster(1979)]%
        {manaster1979real}
\bibfield{author}{\bibinfo{person}{Steven Manaster}.}
  \bibinfo{year}{1979}\natexlab{}.
\newblock \showarticletitle{Real and nominal efficient sets}.
\newblock \bibinfo{journal}{\emph{The Journal of Finance}}
  \bibinfo{volume}{34}, \bibinfo{number}{1} (\bibinfo{year}{1979}),
  \bibinfo{pages}{93--102}.
\newblock


\bibitem[Markowits(1952)]%
        {markowits1952portfolio}
\bibfield{author}{\bibinfo{person}{Harry~M Markowits}.}
  \bibinfo{year}{1952}\natexlab{}.
\newblock \showarticletitle{Portfolio selection}.
\newblock \bibinfo{journal}{\emph{Journal of finance}} \bibinfo{volume}{7},
  \bibinfo{number}{1} (\bibinfo{year}{1952}), \bibinfo{pages}{71--91}.
\newblock


\bibitem[Nakamoto(2008)]%
        {nakamoto2008bitcoin}
\bibfield{author}{\bibinfo{person}{Satoshi Nakamoto}.}
  \bibinfo{year}{2008}\natexlab{}.
\newblock \showarticletitle{Bitcoin: A peer-to-peer electronic cash system}.
\newblock \bibinfo{journal}{\emph{Decentralized Business Review}}
  (\bibinfo{year}{2008}), \bibinfo{pages}{21260}.
\newblock


\bibitem[Nelsen(2006)]%
        {nelsen2006introduction}
\bibfield{author}{\bibinfo{person}{Roger~B Nelsen}.}
  \bibinfo{year}{2006}\natexlab{}.
\newblock \bibinfo{booktitle}{\emph{An introduction to copulas}}.
\newblock \bibinfo{publisher}{Springer}.
\newblock


\bibitem[Sharpe(1964)]%
        {sharpe1964capital}
\bibfield{author}{\bibinfo{person}{William~F Sharpe}.}
  \bibinfo{year}{1964}\natexlab{}.
\newblock \showarticletitle{Capital asset prices: A theory of market
  equilibrium under conditions of risk}.
\newblock \bibinfo{journal}{\emph{The journal of finance}}
  \bibinfo{volume}{19}, \bibinfo{number}{3} (\bibinfo{year}{1964}),
  \bibinfo{pages}{425--442}.
\newblock


\bibitem[Shiller(1998)]%
        {shiller1998indexed}
\bibfield{author}{\bibinfo{person}{Robert~J Shiller}.}
  \bibinfo{year}{1998}\natexlab{}.
\newblock \bibinfo{title}{Indexed units of account: theory and assessment of
  historical experience}.
\newblock


\bibitem[Shiller et~al\mbox{.}(1981)]%
        {shiller1981stock}
\bibfield{author}{\bibinfo{person}{Robert~J Shiller} {et~al\mbox{.}}}
  \bibinfo{year}{1981}\natexlab{}.
\newblock \showarticletitle{Do stock prices move too much to be justified by
  subsequent changes in dividends?}
\newblock  (\bibinfo{year}{1981}).
\newblock


\bibitem[Simmons(2021)]%
        {simmons2021regulating}
\bibfield{author}{\bibinfo{person}{Amanda Simmons}.}
  \bibinfo{year}{2021}\natexlab{}.
\newblock \showarticletitle{Regulating Libra: Will Legal and Regulatory
  Uncertainty Prevent the Launch of Facebook's Cryptocurrency Project?}
\newblock \bibinfo{journal}{\emph{J. Bus. \& Tech. L.}}  \bibinfo{volume}{16}
  (\bibinfo{year}{2021}), \bibinfo{pages}{83}.
\newblock


\bibitem[Solnik(1978)]%
        {solnik1978inflation}
\bibfield{author}{\bibinfo{person}{Bruno~H Solnik}.}
  \bibinfo{year}{1978}\natexlab{}.
\newblock \showarticletitle{Inflation and optimal portfolio choices}.
\newblock \bibinfo{journal}{\emph{Journal of Financial and quantitative
  analysis}} \bibinfo{volume}{13}, \bibinfo{number}{5} (\bibinfo{year}{1978}),
  \bibinfo{pages}{903--925}.
\newblock


\bibitem[{Tether}(2015)]%
        {tetherTransparency}
\bibfield{author}{\bibinfo{person}{{Tether}}.} \bibinfo{year}{2015}\natexlab{}.
\newblock \bibinfo{title}{{T}ransparency --- tether.to}.
\newblock
  \bibinfo{howpublished}{\url{https://tether.to/en/transparency/\#usdt}}.
\newblock
\newblock
\shownote{[Accessed 10-09-2025]}.


\bibitem[{The Libra Association}(2019)]%
        {archiveLibraWhite}
\bibfield{author}{\bibinfo{person}{{The Libra Association}}.}
  \bibinfo{year}{2019}\natexlab{}.
\newblock \bibinfo{title}{{L}ibra {W}hite {P}aper : {T}he {L}ibra {A}ssociation
  : {F}ree {D}ownload, {B}orrow, and {S}treaming : {I}nternet {A}rchive ---
  archive.org}.
\newblock
  \bibinfo{howpublished}{\url{https://archive.org/details/facebooklibrawhitepaper}}.
\newblock
\newblock
\shownote{[Accessed 18-09-2025]}.


\bibitem[Tyler(1987)]%
        {tyler1987distribution}
\bibfield{author}{\bibinfo{person}{David~E Tyler}.}
  \bibinfo{year}{1987}\natexlab{}.
\newblock \showarticletitle{A distribution-free M-estimator of multivariate
  scatter}.
\newblock \bibinfo{journal}{\emph{The annals of Statistics}}
  (\bibinfo{year}{1987}), \bibinfo{pages}{234--251}.
\newblock


\bibitem[Woltzenlogel~Paleo(2023)]%
        {woltzenlogel2023stablecoin}
\bibfield{author}{\bibinfo{person}{Bruno Woltzenlogel~Paleo}.}
  \bibinfo{year}{2023}\natexlab{}.
\newblock \showarticletitle{Stablecoin}.
\newblock In \bibinfo{booktitle}{\emph{Encyclopedia of Cryptography, Security
  and Privacy}}. \bibinfo{publisher}{Springer}, \bibinfo{pages}{1--5}.
\newblock


\bibitem[Zetzsche et~al\mbox{.}(2019)]%
        {zetzsche2019regulating}
\bibfield{author}{\bibinfo{person}{Dirk~A Zetzsche}, \bibinfo{person}{Ross~P
  Buckley}, {and} \bibinfo{person}{Douglas~W Arner}.}
  \bibinfo{year}{2019}\natexlab{}.
\newblock \showarticletitle{Regulating LIBRA: The transformative potential of
  Facebook’s cryptocurrency and possible regulatory responses}.
\newblock  (\bibinfo{year}{2019}).
\newblock


\end{thebibliography}

\appendix

\section{Estimation of $\Sigma_t$}
\label{app-sigma}

Estimation of $\Sigma_t$ is based on observable price log-returns over the past $3N$ periods up to time $t$.
Let $\mathcal{W}_t \coloneqq \{t-3N+1, t-3N+2,\ldots, t\}$ denote this estimation window.
For notational convenience, define $\{\mathbf{\Delta p}_{i}\} \coloneqq \{\mathbf{\Delta p}_{i,s}\}_{s \in \mathcal{W}_t}$ as the sequence of observable price log-returns in units of $i$ over $\mathcal{W}_t$, and $\{\mathbf{\Delta \widehat{v}}\} \coloneqq \{\mathbf{\Delta \widehat{v}}_{s}\}_{s \in \mathcal{W}_t}$ as the corresponding sequence of MLVs.

\paragraph{Iterative procedure.}
Since $\mathbf{\Delta \widehat{v}}_{t}$ depends on $\Sigma_t$ through \eqref{eqn:mlv}, while $\Sigma_t$ in turn represents the covariance structure of $\mathbf{\Delta \widehat{v}}_{t}$, $\Sigma_t$ is estimated via the following iterative procedure.

\begin{itemize}
    \item \textbf{Step 1 (Initialization).}
    Set $\Sigma_t^{(k=0)}$ to the identity matrix.

    \item \textbf{Step 2 (MLV update).}
    Given $\Sigma_t^{(k)}$, compute the scalar adjustment $\Delta \widehat{v}_{i,s}^{(k)}$ for all $s \in \mathcal{W}_t$ from 
    $\{\mathbf{\Delta p}_{i}\}$ using \eqref{eqn:mlv}, and construct $\{\mathbf{\Delta \widehat{v}}\}^{(k)}$ using $\mathbf{\Delta \widehat{v}}_{s}^{(k)} = \mathbf{\Delta p}_{i,s} + \Delta \widehat{v}_{i,s}^{(k)} \mathbf{1}$.

    \item \textbf{Step 3 (Covariance update).}
    Update $\Sigma_t^{(k+1)}$ from the resulting
    $\{\mathbf{\Delta \widehat{v}}\}^{(k)}$
    using a robust covariance estimator (described below).

    \item \textbf{Step 4 (Iteration).} 
    Set $k \leftarrow k+1$ and repeat Steps 2–4 until convergence. 
\end{itemize}

\noindent
Starting from the identity matrix, this procedure estimates
$\Sigma_t$ from past observable price log-returns by iteratively reconstructing
$\mathbf{\Delta \widehat{v}}_{t}$, without requiring MLV-denominated data ex ante.

\paragraph{Robust covariance estimation.}
In Step 3 above, we estimate the covariance \emph{structure} of $\mathbf{\Delta \widehat{v}}_{t}$ using \textit{Tyler’s M-estimator} \cite{tyler1987distribution}, which is robust to heavy-tailed financial returns.
Tyler’s estimator consistently identifies the covariance structure under elliptical distributions up to a positive scale factor, which is inconsequential here because the MLV is invariant to positive scalar rescaling of $\Sigma_t$\footnote{This invariance arises because any positive scalar rescaling of $\Sigma_t$
cancels out in \eqref{eqn:mlv}.}.
Formally, Tyler’s M-estimator $\hat{\Sigma}_t$ is defined (up to scale) as the solution to the fixed-point equation:

\begin{equation}
\hat{\Sigma}_t
=
\frac{N}{|\mathcal{W}_t|}
\sum_{s \in \mathcal{W}_t}
\frac{\mathbf{\Delta \widehat{v}}_{s}\mathbf{\Delta \widehat{v}}_{s}^{\top}}
{\mathbf{\Delta \widehat{v}}_{s}^{\top}\hat{\Sigma}_t^{-1}\mathbf{\Delta \widehat{v}}_{s}},
\label{eq:tyler}
\end{equation}

\noindent
where $|\mathcal{W}_t|$ denotes the sample size within the estimation window, i.e., $3N$.
The fixed-point equation \eqref{eq:tyler} is solved numerically by iterating the above procedure until convergence.

\section{\texorpdfstring{Definition of $\widehat{\boldsymbol{\mu}}_{i, t}$}{Definition of mu-hat_t}} \label{app-mu}
Let $d_{ij,t}$ denote the dividend of asset $j$ distributed at that time $t$, expressed in unit of $i$.
In this case, its dividend return is defined as:

\begin{equation}
r^{\mathrm{div}}_{ij,t}
\coloneqq
\frac{d_{ij,t}}{p_{ij,t-1}}.
\end{equation}

\noindent
Collecting these returns across all assets yields the dividend return vector:
\begin{equation}
\mathbf{r}^{\mathrm{div}}_{i,t}
\coloneqq
\bigl(
r^{\mathrm{div}}_{i1,t},
\dots,
r^{\mathrm{div}}_{iN,t}
\bigr)^{\top}.
\end{equation}

\noindent
At each rebalancing time $t$, the expected dividend return vector is estimated by using historical data over a fixed lookback window.
Specifically, we define:
\begin{equation}
\widehat{\boldsymbol{\mu}}_{i, t}
\coloneqq
\frac{1}{\lvert \mathcal{L}_t \rvert}
\sum_{s \in \mathcal{L}_t}
\mathbf{r}^{\mathrm{div}}_{i,s},
\end{equation}
where $\mathcal{L}_t$ denotes the set of past trading times within the lookback period.
Here, the vector $\widehat{\boldsymbol{\mu}}_{i,t}$ provides an empirical estimate of the (unobserved) true expected dividend return vector $\boldsymbol{\mu}_{i,t}$.

\end{document}